\def\calC{\mathcal{C}}
\def\calU{\mathcal{U}}
\def\calA{\mathcal{A}}
\def\hatA{\hat{A}}
\def\hatcalA{\hat{\mathcal{A}}}
\def\calL{\mathcal{L}}
\def\calI{\mathcal{I}}
\def\opta{A_{\text{opt}}}
\def\optp{P_{\text{opt}}}
\newtheorem{lemma}{Lemma}
\newtheorem{theorem}{Theorem}
\newtheorem{corollary}{Corollary}
\title{An Optimal Algorithm for Half-plane Hitting Set\thanks{This research was supported in part by NSF under Grant CCF-2300356. This paper will appear in {\em Proceedings of the SIAM Symposium on Simplicity in Algorithms (SOSA 2025)}.}
}
\author{
Gang Liu\thanks{Kahlert School of Computing,
University of Utah, Salt Lake City, UT 84112, USA. {\tt u0866264@utah.edu}}
\and
Haitao Wang\thanks{Kahlert School of Computing,
University of Utah, Salt Lake City, UT 84112, USA. {\tt haitao.wang@utah.edu}}
}
\begin{document}
\pagestyle{plain}
\date{}

\thispagestyle{empty}
\maketitle

\vspace{-0.3in}

\begin{abstract}
Given a set $ P $ of $n$ points and a set $ H $ of $n$ half-planes in the plane, we consider the problem of computing a smallest subset of points such that each half-plane contains at least one point of the subset. The previously best algorithm solves the problem in $O(n^3 \log n)$ time. It is also known that $\Omega(n \log n)$ is a lower bound for the problem under the algebraic decision tree model.
In this paper, we present an $O(n \log n)$ time algorithm, which matches the lower bound and thus is optimal. Another virtue of the algorithm is that it is relatively simple.
\end{abstract}

{\em Keywords:} Geometric hitting set, half-planes, points, geometric coverage

\section{Introduction}
\label{sec:intro}
Given a set $P$ of $n$ points and a set $H$ of $n$ half-planes in the plane, we say that a point {\em hits} a half-plane if the half-plane contains the point. A {\em hitting set} for $H$ is a subset $P'$ of $P$ such that each half-plane of $H$ is hit by at least one point of $P'$. Given $P$ and $H$, the {\em half-plane hitting set problem} is to compute a smallest hitting set for $H$. 

The problem was studied before~\cite{ref:ChanEx14, ref:Har-PeledWe12, ref:LiuGe23, ref:LiuOn24, ref:LiuUn24}. Har-Peled and Lee~\cite{ref:Har-PeledWe12} first proposed an $O(n^6)$ time algorithm. Liu and Wang~\cite{ref:LiuGe23} gave a faster algorithm by reducing the problem to $O(n^2)$ instances of the {\em lower-only} half-plane hitting set problem in which all half-planes are lower ones. Consequently, if the lower-only problem can be solved in $O(T)$ time, the general problem is solvable in $O(n^2 \cdot T)$ time. Liu and Wang~\cite{ref:LiuGe23} derived an algorithm of $O(n^2 \log n)$ time for the lower-only problem and thus also solved the general half-plane hitting set problem in $O(n^4\log n)$ time. 
Very recently Wang and Xue~\cite{ref:WangAl24} gave an improved $O(n \log n)$ time algorithm for the lower-only problem, leading to a solution for the general problem in $O(n^3 \log n)$. 
On the other hand, Wang and Xue~\cite{ref:WangAl24} proved a lower bound of $\Omega(n \log n)$ under the algebraic decision tree model even for the lower-only problem.

In this paper, we present a new algorithm for the general half-plane hitting set problem and our algorithm runs in $O(n \log n)$ time. Our algorithm not only significantly improves the previous $O(n^3\log n)$ time result by a quadratic factor, but also matches the $\Omega(n\log n)$ lower bound and thus is optimal. In addition, our algorithm is also interesting because it is relatively simple.

\paragraph{\bf Related work.}
In the {\em weighted} half-plane hitting set problem, every point of $P$ has a weight and the goal is to compute a hitting set for $H$ that has the smallest total weight. The problem was also studied before. The above $O(n^6)$ time algorithm in~\cite{ref:Har-PeledWe12} also works for the weighted case. So does the reduction by Liu and Wang~\cite{ref:LiuGe23} discussed above. Consequently, if the weighted lower-only problem has an $O(T)$ time solution, then the general weighted problem can be solved in $O(n^2 \cdot T)$ time. In another recent paper Liu and Wang~\cite{ref:LiuOn24} presented an $O(n^{3/2}\log^2 n)$ time algorithm for the weighted lower-only problem and therefore solved the weighted general half-plane hitting set problem in $O(n^{7/2}\log^2 n)$ time. 

A closely related problem is the half-plane {\em coverage} problem. Given $P$ and $H$ as above, the problem is to find a smallest subset of $H$ whose union covers all the points of $P$. Notice that the {\em lower-only} problem, where all half-planes are lower ones, is dual to the lower-only hitting set problem (i.e., the two problems can be reduced to each other in linear time). As such, an algorithm for the lower-only hitting set problem can be used to solve the lower-only coverage problem with the same time complexity, and vice versa. In fact, the $O(n\log n)$ time algorithm of Wang and Xue~\cite{ref:WangAl24} discussed above for the lower-only hitting set problem was originally described to solve the lower-only coverage problem. This is also the case for the weighted problem and therefore the $O(n^{3/2}\log^2 n)$ time algorithm of Liu and Wang~\cite{ref:LiuOn24} for the weighted lower-only half-plane hitting set problem also works for the weighted lower-only half-plane coverage problem. It should be noted that in the general case where both upper and lower half-planes are present in $H$, the hitting set problem and the coverage problem are not dual to each other anymore. For the general half-plane coverage problem, the currently fastest algorithm was due to Wang and Xue~\cite{ref:WangAl24}  and their algorithm runs in $O(n^{4/3} \log^{5/3} n \log^{O(1)} \log n)$ time. 
This actually makes our $O(n\log n)$ time algorithm for the general half-plane hitting set problem all the more interesting. 

As a fundamental problem, the hitting set problem has been studied extensively in the literature. Hitting set problems under various geometric settings have also attracted much attention and many of these problems are NP-hard~\cite{ref:OualiA14,ref:ChanEx14,ref:MustafaIm10, ref:BusPr18, ref:EvenHi05, ref:GanjugunteGe11, ref:LiA15}. For example, given a set of disks and a set of points in the plane, the disk hitting set problem is to find a smallest subset of points that hit all disks. The problem is NP-hard, even when all disks have the same radius \cite{ref:DurocherDu15, ref:KarpRe72, ref:MustafaIm10}. 
Certain variants of the problem have polynomial time solutions. Liu and Wang \cite{ref:LiuGe23} considered a line-constrained version of the problem in which the centers of all disks lie on a line, and presented a polynomial time algorithm. See \cite{ref:LiuOn24,ref:LiuUn24,ref:PedersenOn18,ref:PedersenAl22} also for the related coverage problem under similar settings. 


\paragraph{\bf Our approach.}
To solve the half-plane hitting set problem on $P$ and $H$, instead of reducing the problem to $O(n^2)$ instances of the lower-only case as in the previous work~\cite{ref:LiuGe23}, we propose a new method that 
reduces the problem to the following {\em circular-point coverage} problem: Given a set $\calA$ of arcs and a set $B$ of points on a circle $C$, compute a smallest subset of arcs whose union covers all points. The circular-point coverage problem can be solved in $O((|\calA| + |B|) \log (|\calA| + |B|))$ time~\cite{ref:WangAl24}, by a slight modification of the algorithms for a circle coverage problem~\cite{ref:AgarwalCo24,ref:LeeOn84} (whose goal is to cover the entire circle). In our reduction, each half-plane of $H$ defines a single point of $B$ and  $|B|=n$. Each point of $P$, however, may generate as many as $n/2$ arcs of $\calA$; thus we have $|\calA|=O(n^2)$. Our reduction makes sure that a point $p \in P$ hits a half-plane $h \in H$ if and only if the point of $B$ defined by $h$ is covered by one of the arcs generated by $p$. 
An optimal solution to the hitting set problem on $P$ and $H$ can be easily obtained from an optimal solution to the circular-point coverage problem: 
Suppose that $\opta$ is a smallest subset of arcs of $\calA$ whose union covers $B$. For each arc of $\opta$, if the arc is generated by the point $p\in P$, then we add $p$ to $\optp$. We prove that the subset $\optp$ thus obtained is a smallest hitting set of $H$. 

The above solves the half-plane hitting set problem in $O(n^2\log n)$ time since $|\calA|=O(n^2)$. We  further improve the algorithm  
by showing that it suffices to use a small subset $\hat{\calA}\subseteq \calA$ of size at most $4n$ and we prove that
a smallest subset of $\hatcalA$ for covering $B$ is also a smallest subset of $\calA$ for covering $B$. As such, with $\hatcalA$ and $B$, the circular-point coverage problem can now be solved in $O(n\log n)$ time since $|\hatcalA|\leq 4n$. In addition, we develop an algorithm to compute $\hatcalA$ in $O(n\log n)$ time. All these efforts lead to an $O(n\log n)$ time algorithm for the half-plane hitting set problem. 

\paragraph{\bf Outline.} The rest of the paper is organized as follows. After defining notation in Section~\ref{sec:pre}, we introduce our problem reduction in Section~\ref{sec:trans}. The algorithm is then described in Section~\ref{sec:algo2}. 

\section{Preliminaries}
\label{sec:pre}
Let $P$ be a set of $n$ points in the plane and $H$ a set of $n$ half-planes. We assume that each half-plane must be hit by a point of $P$; otherwise a hitting set does not exist. We can check whether this is true in $O(n\log n)$ time (e.g., first compute the convex hull of $P$; then for each half-plane $h$ we can use the convex hull to determine whether $h$ contains a point of $P$ in $O(\log n)$ time). 

For each half-plane $h$, we define its {\em normal} as a vector perpendicular to its bounding line and toward the interior of $h$. If two half-planes $h,h'\in H$ have the same normal, then one of the half-plane contains the other, say, $h\subseteq h'$. As such, $h'$ is redundant because any point hitting $h$ must $h'$ as well. We can easily find all redundant half-planes in $O(n\log n)$ time by sorting all half-planes by their normals. In what follows, we assume that no two half-planes have the same normal. 

To make our later discussion easier, we first determine whether $P$ has a point $p$ that hits all half-planes. If such a point $p$ exists, then $p$ itself forms a smallest hitting set for $H$ and $p$ must be in the common intersection $D$ of all half-planes. This special case can be solved in $O(n\log n)$ time as follows. First, we compute $D$, which can be done in $O(n\log n)$ time, e.g., by an incremental algorithm. If $D=\emptyset$, then there is no point in $P$ that hits all half-planes. If $D\neq \emptyset$, then for each point $p\in P$, we determine whether $p$ is inside $D$, which takes $O(\log n)$ time as $D$ is a convex polygon. As such, in $O(n\log n)$ time, we can solve the special case in which $P$ has a point that hits all half-planes. In the following, we assume that no point of $P$ hits all half-planes. 


\section{Reducing to a circular-point coverage problem}
\label{sec:trans}

In this section, we reduce the half-plane hitting set problem for $P$ and $H$ to an instance of the circular-point coverage problem for a set $\calA$ of arcs and a set $B$ of points on a circle $C$. 

In the following, we first define the circular-point coverage problem, i.e., $\calA$, $B$, and $C$. Then we prove the correctness of the reduction, i.e., explain why a solution to the circular-point coverage problem leads to a solution to the half-plane hitting set problem. 

\subsection{Defining the circular-point coverage problem}

First, let $C$ denote an arbitrary unit circle.

\paragraph{Defining $B$: the points.}
For each half-plane $h\in H$, we define a point $b$ as the point of $C$ intersected by the ray originating from the center of $C$ and parallel to the normal of $h$. We say that $b$ is {\em defined} by $h$ and $h$ is the {\em defining} half-plane of $b$. 
Let $B$ denote the set of all points on $C$ defined by the half-planes of $H$. As no two half-planes of $H$ have the same normal, no two points of $B$ share the same location. 

We sort the points of $B$ counterclockwise on $C$ as a cyclic list $b_1,b_2,\ldots,b_n$.  We use $B[i,j]$ to denote the counterclockwise (contiguous) sublist of the cyclic list of $B$ from $b_i$ to $b_j$ including both $b_i$ and $b_j$. In other words, if $i\leq j$, then $B[i,j]=\{b_i,b_{i+1},\ldots,b_j\}$; otherwise $B[i,j]=\{b_i,\ldots,b_n,b_1,\ldots,b_j\}$. 
For any two points $b,b'\in C$, we use $C[b,b']$ to denote the portion of $C$ counterclockwise from $b$ to $b'$ and including both $b$ and $b'$.


\paragraph{\bf Defining $\calA$: the arcs.} For each point $p\in P$, we define a set $A(p)$ of arcs on $C$. 
For each maximal sublist $B[i,j]$ of $B$ such that $p$ hits all defining half-planes of the points of $B[i,j]$, we add the arc $C[b_i,b_j]$ to $A(p)$. Due to the assumption that no point of $P$ hits all the half-planes, the set $A(p)$ is well defined. 
Note that $A(p)$ may have at most $\lfloor n/2\rfloor$ arcs and all these arcs are pairwise disjoint. Define $\calA$ to be the union of $A(p)$ of all point $p \in P$. Clearly, $|\calA|=O(n^2)$. 

\subsection{Correctness of the reduction}

Consider the circular-point coverage problem for $B$ and $\calA$. Suppose $\opta$ is an optimal solution, i.e., $\opta$ is a smallest subset of $\calA$ covering $B$. We create a subset $\optp$ of $P$ as follows: For each arc in $\opta$, we add its defining point to $\optp$. 

In what follows, we prove that $\optp$ is an optimal solution to our half-plane hitting set problem for $P$ and $H$. At first glance, one potential issue is that two arcs of $\opta$ might be defined by the same point of $P$ and thus a point may be added to $\optp$ multiple times. We will show that this is not possible, implying $|\optp|=|\opta|$. All these are proved in Corollary~\ref{corollary:circular-point}, which follows mostly from the following lemma.

\begin{lemma}
\label{lemm:circular-point}
    $B$ can be covered by $k$ arcs of $\calA$ if and only if $H$ can be hit by $k$ points of $P$.
\end{lemma}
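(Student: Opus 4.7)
The plan is to prove both implications of the bi-conditional. The forward direction, that $k$ arcs covering $B$ yield $k$ points hitting $H$, is direct: for arcs $\alpha_1, \ldots, \alpha_k \in \calA$ covering $B$, I take the defining point $p_j \in P$ of each $\alpha_j$ (so $\alpha_j \in A(p_j)$) and form $P' = \{p_1, \ldots, p_k\}$, which has size at most $k$. For any $h \in H$ with defined point $b \in B$, some $\alpha_j$ covers $b$; since $\alpha_j = C[b_i, b_{j'}]$ arises from a maximal $p_j$-hit run $B[i,j']$, the point $p_j$ hits $h$, so $P'$ is a hitting set.

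For the reverse direction, suppose $P' = \{p_1, \ldots, p_k\} \subseteq P$ hits $H$. I would construct an arc cover of size at most $k$ using the convex hull of $P'$. Writing each $h_i \in H$ as $\{x : \langle n_i, x\rangle \geq c_i\}$ with unit inward normal $n_i = (\cos\theta_i, \sin\theta_i)$, the hitting condition $p_\ell \in h_i$ becomes $\langle p_\ell, n_i\rangle \geq c_i$. Let
\[
E(\theta) = \max_{1 \leq \ell \leq k} \langle p_\ell, (\cos\theta, \sin\theta)\rangle,
\]
the support function of $\mathrm{conv}(P')$. Since some $p_\ell$ hits each $h_i$, we have $c_i \leq E(\theta_i)$, and the maximum is attained at a vertex $p^*(i)$ of $\mathrm{conv}(P')$, so $p^*(i) \in P'$ itself hits $h_i$.

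By a standard fact from convex geometry, as $\theta$ rotates once around the circle the argmax $p^*(\theta)$ visits each vertex of $\mathrm{conv}(P')$ exactly once in cyclic order, partitioning the circle of directions into arcs $I_1, \ldots, I_{|Q|}$, where $Q$ denotes the vertex set of $\mathrm{conv}(P')$, and on each $I_t$ the argmax is a single vertex $p^{(t)} \in Q$. For each $t$, the subset $\{b_i : \theta_i \in I_t\}$ is a contiguous cyclic sublist of $B$ whose defining half-planes are all hit by $p^{(t)}$. Being contiguous and lying inside the set of $p^{(t)}$-hits, this sublist sits inside a single maximal run of $p^{(t)}$-hits, i.e., a single arc $\alpha^{(t)} \in A(p^{(t)})$. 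The family $\{\alpha^{(t)} : 1 \leq t \leq |Q|\}$ then covers $B$ and has size $|Q| \leq |P'| = k$.

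The main obstacle is to recognize that the correct tool is the support function of $\mathrm{conv}(P')$ and to verify its cyclic argmax structure, which converts a hitting set of size $k$ into an arc cover whose size is bounded by the number of convex-hull vertices of $P'$. Once this structural observation is in hand, the maximal-run definition of $A(p)$ turns each argmax $\theta$-arc directly into an element of the cover, and the two inequalities $\min|\opta| \leq |Q| \leq k$ and $\min|\optp| \leq k$ close both directions simultaneously.
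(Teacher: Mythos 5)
Your proof is correct and follows essentially the same route as the paper: the forward direction is identical, and your support-function/argmax partition of the circle of directions is exactly the paper's partition of $C$ by the normals of the convex-hull edges of $P'$, with the extreme vertex in each direction arc hitting all the corresponding half-planes. The only cosmetic difference is that your formulation handles the degenerate two-point hull uniformly, whereas the paper treats $k=2$ and $k\geq 3$ as separate cases.
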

\begin{proof}
Suppose $B$ can be covered by a subset $A'\subseteq \calA$ of $k$ arcs. Then, let $P'$ be the set of defining points of all arcs of $A'$. Clearly, $|P'|\leq k$. We claim that $P'$ is a hitting set of $H$. To see this, consider a half-plane $h$. Let $b$ be the point of $B$ defined by $h$. Then, $b$ is covered by an arc $\alpha$ of $A'$. By definition, $h$ is hit by the defining point of $\alpha$, which is in $P'$. Therefore, $P'$ is a hitting set of $H$. This proves one direction of the lemma. 
In the following, we prove the other direction. 

Suppose $H$ has a hitting set $P'$ of $k$ points. Our goal is to show that $\calA$ has $k$ arcs  whose union covers $B$. Since by our assumption no point of $P$ hits all the half-planes, $k\geq 2$ holds. 
Depending on whether $k=2$ or $k\geq 3$, there are two cases. 

\begin{enumerate}
    \item If $k=2$, let $p_u$ and $p_l$ denote the two points of $P'$. We rotate the coordinate system so that the line segment $\overline{p_up_l}$ is vertical and $p_u$ is higher than $p_l$. Since every half-plane of $H$ is hit by $p_u$ or $p_l$, it is not difficult to see that all upper half-planes (in the rotated coordinate system) must be hit by $p_u$ and all lower half-planes must be hit by $p_l$. Observe also that the points of $B$ defined by all upper (resp., lower) half-planes of $H$ form a sublist $B_u$ (resp., $B_l$) of $B$. Hence, $p_u$ must define an arc $A_u$ that covers all points of $B_u$ and $p_l$ must define an arc $A_l$ that covers all points of $B_l$. Hence, $A_u$ and $A_l$ together cover all points of $B$. Therefore, $\calA$ has two arcs whose union covers $B$. 

    \item If $k\geq 3$, then let $p_1,p_2,\ldots,p_t$ be the vertices of the convex hull of $P'$ ordered counterclockwise on the convex hull. Hence, $t\leq k$. For each pair of adjacent points $p_i$ and $p_{i+1}$ on the convex hull, with index module $t$, define $h_{i,i+1}$ as the half-plane whose bounding line contains the line segment $\overline{p_ip_{i+1}}$ such that it does not contain the interior of the convex hull of $P'$ (see Fig.~\ref{fig:convexhull}). We define a point $b'_{i,i+1}$ on the circle $C$ using the normal of $h_{i,i+1}$. The $t$ points $b'_{i,i+1}$, $1\leq i\leq t$, partition $C$ into $t$ arcs $A'_i$, where $A'_i=C[b'_{i-1,i},b'_{i,i+1}]$. In what follows, we show that for each arc $A'_i$, $\calA$ has an arc that covers all points of $B\cap A_i'$. This will prove that $\calA$ has $k$ arcs whose union covers $B$ since $t\leq k$. 

    Consider an arc $A_i'=C[b'_{i-1,i},b'_{i,i+1}]$. For any point $b\in A_i'$, let $h$ be its defining half-plane. Let $\rho$ denote the normal of $h$. By definition, $\rho$ is in the interval of the directions counterclockwise from the normal of $h_{i-1,i}$ to that of $h_{i,i+1}$ (see Fig.~\ref{fig:convexhull}). It is not difficult to see that the point $p_i$ is a most extreme point of $P'$ along the direction $\rho$. Since $h$ is hit by a point of $P'$, $h$ must be hit by $p_i$. This means that $b$ must be covered by an arc defined by $p_i$. As all points of $A_i'\cap B$ are contiguous in the cyclic list of $B$, they must be covered by a single arc defined by $p_i$. This proves that $\calA$ has an arc that covers all points of $B\cap A_i'$. 
\end{enumerate}

\begin{figure}[h]
\centering
\includegraphics[height=1.7in]{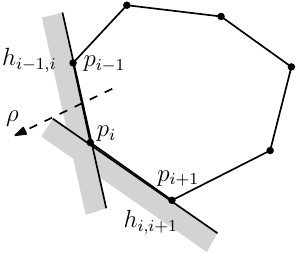}
\caption{Illustration of $h_{i-1,i}$, $h_{i,i+1}$, and $\rho$.}
\label{fig:convexhull}
\end{figure}

The lemma thus follows. 
\end{proof}


\begin{corollary}
\label{corollary:circular-point}
\begin{enumerate}
    \item The size of a smallest subset of $\calA$ for covering $B$ is equal to the size of a smallest subset of $P$ for hitting $H$.
    \item No two arcs of $\opta$ are defined by the same point.
    \item $\optp$ is an optimal solution to the half-plane hitting set problem.
\end{enumerate}
\end{corollary}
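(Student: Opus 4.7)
The plan is to derive all three items of Corollary~\ref{corollary:circular-point} from Lemma~\ref{lemm:circular-point} together with short minimality bookkeeping; no new geometric content is required, since both directions of the lemma already contain the constructions needed.

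For Item~1, I would let $k^{*}$ denote the size of a smallest hitting set of $H$ in $P$ and $k'$ the size of a smallest subset of $\calA$ covering $B$, and then apply the two directions of Lemma~\ref{lemm:circular-point} separately. Applying the hitting-set-to-cover direction to an optimal hitting set of size $k^{*}$ produces a cover of $B$ of size $k^{*}$, which gives $k' \le k^{*}$. Applying the cover-to-hitting-set direction to $\opta$ produces a hitting set of size at most $k'$, which gives $k^{*} \le k'$. Combining the two inequalities yields $k' = k^{*}$.

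For Item~2 I would argue by contradiction. Suppose two arcs of $\opta$ share a defining point in $P$, and let $P'$ be the set of defining points of all arcs in $\opta$; then $|P'| \le |\opta| - 1$. The first paragraph of the proof of Lemma~\ref{lemm:circular-point} applied to $\opta$ shows that this $P'$ is already a hitting set of $H$, giving a hitting set of size strictly less than $|\opta| = k^{*}$, which contradicts Item~1. Hence the defining points of distinct arcs of $\opta$ are distinct.

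Item~3 then follows immediately: by Item~2 the map from arcs of $\opta$ to their defining points is injective, so $|\optp| = |\opta|$, and by Item~1 this equals $k^{*}$; and the construction above (which is just the cover-to-hit direction of the lemma) certifies that $\optp$ actually hits every half-plane of $H$. Thus $\optp$ is a hitting set of the minimum possible size. The only subtlety I anticipate is being careful in Item~1 to keep $k^{*}$ and $k'$ notationally separate, so as not to import what is being proved; once this is done the rest is pure accounting on top of Lemma~\ref{lemm:circular-point}.
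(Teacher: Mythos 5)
Your proposal is correct and follows essentially the same route as the paper: item~1 from combining both directions of Lemma~\ref{lemm:circular-point}, item~2 by contradiction using the fact that the defining points of $\opta$ already form a hitting set, and item~3 by combining the first two. No gaps; the bookkeeping with $k^{*}$ and $k'$ is exactly the intended argument.
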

\begin{proof}
The first corollary statement directly follows from Lemma~\ref{lemm:circular-point}.

For the second corollary statement, notice that $\optp$ is a hitting set of $H$, which follows a similar argument to the first direction of Lemma~\ref{lemm:circular-point}.
Assume to the contrary that $\opta$ has two arcs $\alpha$ and $\alpha'$ defined by the same point $p \in P$. Then, by the definition of $\optp$, $|\optp| < |\opta|$. As $\opta$ is a smallest subset of $\calA$ covering $B$ and $\optp$ is a hitting set of $H$, we obtain that the size of a smallest hitting set of $H$ is smaller than the size of a smallest subset of $\calA$ for covering $B$, a contradiction to the first corollary statement.

For the third corollary statement, since $|\optp| \leq |\opta|$, $\opta$ is a smallest subset of $\calA$ covering $\calC$, and $\optp$ is a hitting set of $H$, by the first corollary statement, $|\optp| = |\opta|$ and $\optp$ must be a smallest hitting set of $H$.
\end{proof} 

By Corollary~\ref{corollary:circular-point}, we have successfully reduced our half-plane hitting set problem to the circular-point coverage problem on $B$ and $\calA$. In Section~\ref{sec:algo2}, we present our algorithm based on this reduction. 

\section{Algorithm}
\label{sec:algo2}

We first present a straightforward algorithm and then introduce the improved algorithm for solving the half-plane hitting set problem, 
based on our problem reduction in Section~\ref{sec:trans}. We follow the notation in Section~\ref{sec:trans}, e.g., $\calA$, $B$, $C$, etc.

With Corollary~\ref{corollary:circular-point}, to solve our half-plane hitting set problem for $P$ and $H$, we can do the following. First, construct $B$ and $\calA$. Second, find a smallest subset $\opta\subseteq \calA$ of arcs to cover $B$. Third, obtain $\optp$ from $\opta$. The first step can be easily done in $O(n^2)$ time. The second step takes $O((|B| + |\calA|) \log(|B| + |\calA|))$ time~\cite{ref:WangAl24}, which is $O(n^2\log n)$ since $|B|=n$ and $|\calA|=O(n^2)$. The third step takes $O(n)$ time. The total time of the algorithm is thus $O(n^2\log n)$.

In the rest of this section, we present an improved algorithm that runs in $O(n\log n)$ time. The main idea is that when solving the circular-point coverage problem, it suffices to use at most four arcs from $A(p)$ for each point $p \in P$. More specifically, for each $p\in P$, we will define a subset $\hatA(p)\subseteq A(p)$ of at most four arcs. Let $\hatcalA=\bigcup_{p\in P}\hatA(p)$, which contains at most $4n$ arcs. We will show that a smallest subset of $\hatcalA$ for covering $B$ is also a smallest subset of $\calA$ for covering $B$. As such, with $\hatcalA$ and $B$, the circular-point coverage problem can now be solved in $O(n\log n)$ time since $|\hatcalA|\leq 4n$. We will also give an algorithm that can compute $\hatcalA$ in $O(n\log n)$ time. Therefore, the total time of the entire algorithm is $O(n\log n)$.   

In the following, we first define $\hatcalA$, then discuss the correctness, i.e., why $\hatcalA$ is sufficient for solving the circular-point coverage problem, and finally present the algorithm to compute $\hatcalA$.

\subsection{Defining $\hat{\calA}$}

For ease of discussion, we assume that no half-plane of $H$ has a vertical bounding line. With the assumption, every half-plane is either an upper one or a lower one. Denote by $H_u$ (resp., $H_l$) the set of all upper (resp., lower) half-planes of $H$. 
Without loss of generality, we assume that $b_1,b_2,\ldots,b_{|H_l|}$ are the points of $B$ defined by the half-planes of $H_l$ while other points of $B$ are defined by those of $H_u$. Let $t=|H_l|$ and $B_l=\{b_1,b_2,\ldots,b_{t}\}$ and $B_u=B\setminus B_l$. 
It is easy to see that all points of $B_l$ lie on the lower half circle $C_l$ of $C$ while the points of $B_u$ lie on the upper half circle $C_u$ of $C$. 

For each point $p\in P$, if $A(p)$ has an arc covering at least one of $b_1$ and $b_n$, then we add it to $\hatA(p)$ and we call it the {\em left arc}, denoted by $\alpha_1(p)$; if $A(p)$ has an arc covering at least one of $b_t$ and $b_{t+1}$, then we also add it to $\hatA(p)$ and we call it the {\em right arc}, denoted by $\alpha_2(p)$. 

The above defines at most two arcs for $\hatA(p)$. In the following, we define two other arcs of $\hatA(p)$ for $p$; one of them lies on $C_l$, which only covers points of $B_l$ and is called the {\em lower arc}, denoted by $\alpha_l(p)$, and the other lies on $C_u$, which only covers points of $B_u$ and is called the {\em upper arc}, denoted by $\alpha_u(p)$. See Fig.~\ref{fig:arc_types}.

Let $A_l(p)$ be the subset of arcs of $A(p)\setminus\{\alpha_1(p),\alpha_2(p)\}$ on $C_l$ and $A_u(p)$ the subset of arcs of $A(p)\setminus\{\alpha_l(p),\alpha_r(p)\}$ on $C_u$. By definition, $\{\alpha_1(p),\alpha_2(p)\}$, $A_l(p)$, and $A_u(p)$ form a partition of $A(p)$. The lower arc is in $A_l(p)$ and the upper arc is in $A_u(p)$. 

Definitions of the lower and upper arcs are symmetric, and therefore we only explain the lower arcs. As the lower arc only covers points of $B_l$, it suffices to only consider the lower half-plane set $H_l$. The way we define the lower arcs is inspired by the technique of Wang and Xue~\cite{ref:WangAl24} for solving the lower-only half-plane coverage problem.

\begin{figure}[t]
\centering
\includegraphics[height=2in]{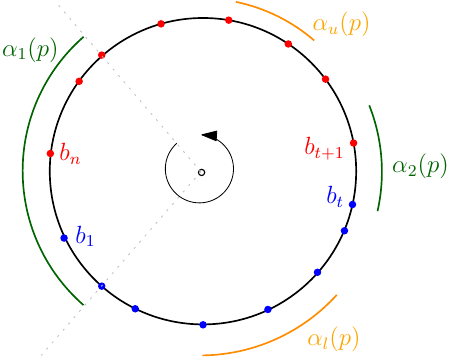}
\caption{Illustration of the relative positions of the four arcs of $\hatA(p)$. The red points are all on the upper half circle $C_u$ while the blue points are all on the lower half circle $C_l$.}
\label{fig:arc_types}
\end{figure}

\paragraph{Defining the lower arcs.}
Note that a point $p$ hits a lower half-plane $h\in H_l$ if and only if $p$ is below the bounding line of $h$.
We utilize the following commonly used duality in computational geometry~\cite{ref:deBergCo08}: A point $(a,b)$ is dual to a line $y=ax-b$, and a line $y=cx+d$ is dual to a point $(c,-d)$. For each point $p\in P$, let $p^*$ denote the lower half-plane whose bounding line is dual to $p$. Define $P^*=\{p^*\ |\ p\in P\}$. For each half-plane $h_i\in H_l$, $1\leq i\leq t$, let $h_i^*$ denote the point dual to the bounding line of $h_i$. Define $H_l^*=\{h_i^*\ |\ h_i\in H_l\}$. According to the duality, a point $p\in P$ hits a half-plane $h_i\in H_l$ if and only if $p^*$ contains $h_i^*$.

By duality, the points $h_1^*,h_2^*,\ldots,h_t^*$ of $H_l^*$ are sorted from left to right, which is consistent with the index order of the points $b_1,b_2,\ldots,b_t$ on $C_l$.  Depending on the context, $H_l^*$ may refer to the above sorted list. We use $H_l^*[i,j]$ to denote the (contiguous) sublist $h_i^*,h_{i+1}^*,\ldots,h_j^*$ with $i\leq j$. 

For each $p\in P$, consider a maximal sublist $H_l^*[i,j]$ of points that are covered by the half-plane $p^*$. According to our duality, all points of $B[i,j]$ are hit by $p$. Hence, if $i\neq 1$ and $j\neq t$, then $H_l^*[i,j]$ {\em corresponds to} an arc  of $A_l(p)$, denoted by $\alpha(H_l^*[i,j])$, in the sense that the set of points of $B$ covered by $\alpha(H_l^*[i,j])$ is exactly $B[i,j]$. If $i=1$, then $H_l^*[i,j]$ is {\em covered} by the left arc $\alpha_1(p)$ of $p$ in the sense that all points of $B[1,j]$ are covered by $\alpha_l(p)$. Similarly, if $j=t$, then $H_l^*[i,j]$ is {\em covered} by the right arc $\alpha_2(p)$ of $p$. 

Define $S(p)$ as the set of all maximal sublists of $H_l^*$ covered by $p^*$. Let $S_l(p)$ denote the set of sublists of $S(p)$ excluding the one (if it exists) that contains $h_1^*$ and the one (if it exists) that contains $h_t^*$. According to the above discussion, sublists of $S_l(p)$ correspond exactly to the arcs of $A_l(p)$. 

Following the method in \cite{ref:WangAl24}, we next define a special sublist from $S(p)$, denoted by $H^*_l[i_p,j_p]$. If $H^*_l[i_p,j_p]$ contains neither $h_1^*$ nor $h_t^*$ (i.e., $i_p\neq 1$ and $j_p\neq t$), then we define the lower arc $\alpha_l(p)$ of $p$ as the arc of $A_l(p)$ corresponding to $H^*_l[i_p,j_p]$, i.e., $\alpha_l(p)=\alpha(H^*_l[i_p,j_p])$; otherwise, $\alpha_l(p)$ is undefined. 

\paragraph{Defining $\boldsymbol{H^*_l[i_p,j_p]}$.}
Let $\calU$ denote the upper envelope of the bounding lines of all half-planes of $P^*$. 
For each $p\in P$, depending on whether the bounding line of $\ell_p$ of the half-plane $p^*$ contains an edge of $\calU$, there are two cases to define $H^*_l[i_p,j_p]$. 
\begin{enumerate}
    \item If $\ell_p$ contains an edge $e$ of $\calU$, let $H_l^*[i,j]$ be the sublist of points of $H_l^*$ that are vertically below $e$. Then, all points of $H_l^*[i,j]$ are covered by $p^*$. Hence, $S(p)$ must contain exactly one sublist that contains all points of $H_l^*[i,j]$ and we define $H^*_l[i_p,j_p]$ to be that sublist (see Fig.~\ref{fig:interval_case1}).
    
    \item If $\ell_p$ does not contain any edge of $\calU$, then let $v_p$ be a vertex of $\calU$ that has a tangent line parallel to $\ell_p$. If $S(p)$ contains a sublist that has a point to the left of $v_p$ and also has a point to the right of $v_p$, then we define $H^*_l[i_p,j_p]$ to be that sublist (see Fig.~\ref{fig:interval_case2}); otherwise $H^*_l[i_p,j_p]$ is undefined.  
\end{enumerate}

The above defines $H^*_l[i_p,j_p]$ if it exists. Again, if $i_p\neq 1$ and $j_p\neq t$, then we define $\alpha_l(p)=\alpha(H^*_l[i_p,j_p])$.

\begin{figure}[t]
\begin{minipage}[t]{0.49\linewidth}
\begin{center}
\includegraphics[totalheight=1.1in]{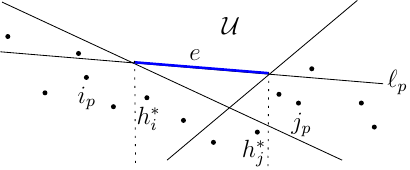}
\caption{Illustration of the definition of $H^*_l[i_p,j_p]$ when $\ell_p$ contains an edge $e$ of $\calU$. $i_p$ and $j_p$ in the figure represent the points $h^*_{i_p}$ and $h^*_{j_p}$, respectively.}
\label{fig:interval_case1}
\end{center}
\end{minipage}
\hspace{0.05in}
\begin{minipage}[t]{0.49\linewidth}
\begin{center}
\includegraphics[totalheight=1.1in]{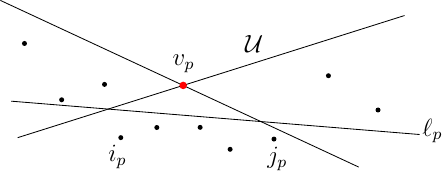}
\caption{Illustration of the definition of $H^*_l[i_p,j_p]$ when $\ell_p$ does not contain an edge of $\calU$. $i_p$ and $j_p$ in the figure represent the points $h^*_{i_p}$ and $h^*_{j_p}$, respectively.}
\label{fig:interval_case2}
\end{center}
\end{minipage}
\end{figure}


\subsection{Correctness}

The following lemma implies that using $\hatcalA$ is sufficient to find a smallest subset of $\calA$ to cover $B$. 

\begin{lemma}\label{lemm:arc_containment}
For any arc $\alpha \in \calA \setminus \hat{\calA}$, $\hat{\calA}$ has an arc containing $\alpha$. 
\end{lemma}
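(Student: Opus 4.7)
The plan is to reduce first to arcs on the lower half-circle $C_l$ and then walk rightward along the upper envelope $\calU$ to identify a canonical arc that contains $\alpha$. Pick any $p \in P$ with $\alpha \in A(p)$. Since $\alpha \notin \hat{A}(p)$, in particular $\alpha$ is neither $\alpha_1(p)$ nor $\alpha_2(p)$, so $\alpha$ covers none of $b_1, b_n, b_t, b_{t+1}$. Hence $\alpha$ lies entirely on $C_l$ or entirely on $C_u$, and by the symmetric roles of the two halves I treat $\alpha \in A_l(p)$, writing $\alpha = \alpha(H_l^*[i,j])$ for a maximal sublist $H_l^*[i,j] \in S(p)$ with $1 < i \le j < t$ and $H_l^*[i,j] \ne H_l^*[i_p, j_p]$.

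Then I would split on the two cases defining $H_l^*[i_p, j_p]$ to set up an initial configuration. In Case 1 ($\ell_p$ contains an edge $e$ of $\calU$), the sublist $H_l^*[i_p, j_p]$ is the unique maximal sublist of $S(p)$ that contains the points below $e$, so $H_l^*[i,j]$ contains no such point; being contiguous in $H_l^*$, it lies entirely on one side of the $x$-range of $e$, WLOG to the right, and I set $p_0 = p$, $e_0 = e$. In Case 2 ($\ell_p$ contains no edge of $\calU$), the sublist $H_l^*[i_p, j_p]$ (if defined) is the unique sublist of $S(p)$ straddling $x(v_p)$, so $H_l^*[i,j]$ lies entirely on one side of $x(v_p)$, WLOG to the right; I take $e_0$ to be the edge of $\calU$ immediately to the right of $v_p$ and $p_0$ its defining point. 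Because $\calU$ is convex, the slope of $e_0$ is at least the slope of $\ell_p$ and $\ell_{p_0}(v_p) = \calU(v_p) \ge \ell_p(v_p)$; together these give $\ell_{p_0}(x) > \ell_p(x)$ for all $x > v_p$, so $\ell_{p_0}$ is strictly above every $h_k^*$ with $k \in [i,j]$. In either case I arrive at the invariant: $\ell_{p_k}$ contains edge $e_k$ of $\calU$, $\ell_{p_k}$ lies above every $h_k^*$ for $k \in [i,j]$, and $H_l^*[i,j]$ lies strictly to the right of $v_k$, the right endpoint of $e_k$.

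The iteration then walks rightward along $\calU$. While $x(h_i^*) > v_k$, I let $e_{k+1}$ be the edge of $\calU$ starting at $v_k$ and $p_{k+1}$ its defining point. Convexity of $\calU$ (consecutive edges have strictly increasing slopes) together with $\ell_{p_{k+1}}(v_k) = \ell_{p_k}(v_k) = \calU(v_k)$ yields $\ell_{p_{k+1}}(x) > \ell_{p_k}(x)$ for all $x > v_k$, so the invariant is preserved. Because each $h_k^*$ lies below $\calU$ and $\calU$ has only finitely many edges, the iteration terminates at some step $K$ with $x(h_i^*) \le v_K$. Then $h_i^*$ is vertically below $e_K$, so by case 1 of the definition $h_i^* \in H_l^*[i_{p_K}, j_{p_K}]$; combined with the invariant that $h_i^*, \ldots, h_j^* \in S(p_K)$ are consecutive in $H_l^*$, this forces $H_l^*[i,j] \subseteq H_l^*[i_{p_K}, j_{p_K}]$. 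Hence $\alpha(H_l^*[i_{p_K}, j_{p_K}])$ contains $\alpha$; this arc equals $\alpha_l(p_K)$ if $i_{p_K} \ne 1$ and $j_{p_K} \ne t$, and otherwise is contained in $\alpha_1(p_K)$ or $\alpha_2(p_K)$, all of which belong to $\hat{A}(p_K) \subseteq \hatcalA$.

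The main obstacle I anticipate is rigorously justifying the iteration invariant, particularly handling potentially degenerate situations where $\ell_p$ is parallel to an edge of $\calU$ without containing it, or where a point $h_k^*$ sits directly below a vertex of $\calU$. The argument rests on the convexity of $\calU$ (strictly increasing slopes of consecutive edges) and the fact that adjacent envelope edges share a vertex; together these ensure that the next envelope line strictly dominates the current one to the right of their common vertex, preserving coverage of the target sublist $H_l^*[i,j]$ at every step.
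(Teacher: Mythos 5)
Your proof is correct, and its skeleton coincides with the paper's: you reduce to an arc $\alpha=\alpha(H_l^*[i,j])$ on $C_l$, produce a point $p'$ with $H_l^*[i,j]\subseteq H_l^*[i_{p'},j_{p'}]$, and then split on whether $i_{p'}=1$ or $j_{p'}=t$ to land in $\alpha_l(p')$, $\alpha_1(p')$, or $\alpha_2(p')$. The genuine difference is in the middle step: the paper treats the containment $H_l^*[i,j]\subseteq H_l^*[i_{p'},j_{p'}]$ as a black box, citing it from Wang and Xue, whereas you prove it from scratch by walking rightward along the upper envelope $\calU$, using convexity (strictly increasing edge slopes and shared vertices) to show that each successive envelope line dominates the previous one to the right of their common vertex, so that the sublist stays covered until its leftmost point $h_i^*$ falls vertically below the current edge, at which point Case~1 of the definition of $H^*_l[i_{p_K},j_{p_K}]$ absorbs the whole sublist. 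This buys self-containment at the cost of length; the paper buys brevity at the cost of an external dependency. Two small presentational points: your loop invariant should require $H_l^*[i,j]$ to lie to the right of the \emph{left} endpoint of $e_k$ (the condition relative to the right endpoint $v_k$ is exactly the loop guard, and in your Case~2 initialization you only establish position relative to $v_p$, the left endpoint of $e_0$); and you should note explicitly that $H_l^*[i,j]\neq H^*_l[i_p,j_p]$ because otherwise $\alpha=\alpha_l(p)\in\hatcalA$. Neither affects the validity of the argument.
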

\begin{proof}
Consider an arc $\alpha \in \calA \setminus \hat{\calA}$. Suppose that $\alpha$ is defined by a point $p\in P$, i.e., $\alpha\in A(p)$. Recall that $\{\alpha_1(p),\alpha_2(p)\}$, $A_l(p)$, and $A_u(p)$ form a partition of $A(p)$. As $\alpha\not\in \hat\calA$, which contains both $\alpha_1(p)$ and $\alpha_2(p)$, $\alpha$ is not in $\{\alpha_1(p),\alpha_2(p)\}$. Hence, $\alpha$ is either in $A_l(p)$ or in $A_u(p)$. Without loss of generality, we assume that $\alpha\in A_l(p)$. 

According to our discussion, the arc $\alpha$ corresponds to a sublist $H_l^*[i,j]$ of $S_l(p)$, i.e., $\alpha=\alpha(H_l^*[i,j])$. It has been proved in \cite{ref:WangAl24} that $H_l^*[i,j]$ must be contained in $H_l^*[i_{p'},j_{p'}]$ for some point $p'\in P$. 
Depending on whether $i_{p'}= 1$ and $j_{p'}= t$, there are three cases. 

\begin{itemize}
    \item If $i_{p'}\neq 1$ and $j_{p'}\neq t$, then according to our definition the lower arc $\alpha_l(p')$ is $\alpha(H_l^*[i_{p'},j_{p'}])$, i.e., the arc of $A_l(p')$ corresponding to $H_l^*[i_{p'},j_{p'}]$. Since $H_l^*[i,j]\subseteq H_l^*[i_{p'},j_{p'}]$, we also have $\alpha(H_l^*[i,j])\subseteq \alpha(H_l^*[i_{p'},j_{p'}])$. 
    Since $\alpha=\alpha(H_l^*[i,j])$ and $\alpha(H_l^*[i_{p'},j_{p'}])\in \hatcalA$, we obtain that $\hatcalA$ has an arc containing $\alpha$. 
    
    \item If $i_{p'}= 1$, then $H_l^*[i_{p'},j_{p'}]$ is covered by the left arc $\alpha_1(p')$ of $p'$. Since $H_l^*[i,j]\subseteq H_l^*[i_{p'},j_{p'}]$, $H_l^*[i,j]$ is also covered by $\alpha_1(p')$. Therefore, $\alpha=\alpha(H_l^*[i,j])\subseteq \alpha_1(p')$. As $\alpha_1(p')\in \hatcalA$, we obtain that $\hatcalA$ has an arc containing $\alpha$. 
    
    \item If $j_{p'}= t$, we can follow a similar argument to the above second case. 
\end{itemize}
The lemma is thus proved. 
\end{proof}

In light of Lemma~\ref{lemm:arc_containment}, to solve the circular-point coverage problem on $B$ and $\calA$, we can first compute the arcs of $\hatcalA$ and then find a smallest subset of $\hatcalA$ to cover $B$. The second step takes $O(n\log n)$ time as $|\hatcalA|\leq 4n$. In what follows, we present an algorithm that can compute $\hatcalA$ in $O(n\log n)$ time. 

\subsection{Computing the arcs of $\hatcalA$}
\label{sec:impl}

Recall that $\hatcalA=\bigcup_{p\in P}\hatA(p)$. 
For each $p\in P$, $\hatA(p)$ consists of at most four arcs: a left arc, a right arc, an upper arc, and a lower arc. We first compute all lower and upper arcs.

\paragraph{Computing lower and upper arcs.}
We only discuss how to compute lower arcs since upper arcs can be computed analogously. We first compute $H_l^*$ and $P^*$, which takes $O(n)$ time. Then, we compute the sublists $H^*_l[i_p,j_p]$ for all $p\in P$. For this, Wang and Xue~\cite{ref:WangAl24} gave an $O(n\log n)$ time algorithm by reducing the problem to ray-shooting queries in a simple polygon. With sublists $H^*_l[i_p,j_p]$, all lower arcs can be obtained in $O(n)$ time by following the definition. As such, the lower arcs of all points $p\in P$ can be computed in $O(n\log n)$ time. 

\paragraph{Computing left and right arcs.}
We only discuss how to compute the left arcs since the right arcs can be computed analogously. For each $p\in P$, recall that its left arc $\alpha_l(p)$ is the arc that contains at least one of the two points $b_1$ and $b_n$. As such, if $p$ does not hit $h_1$ or $h_n$, then $\alpha_l(p)$ does not exist. We assume that $p$ hits at least one of $h_1$ and $h_n$. To compute $\alpha_l(p)$, it suffices to find the smallest index $i'_p\in [1,n]$ such that $h_{i'_p}$ is not hit by $p$ and the largest index $j'_p\in [1,n]$ such that $h_{j'_p}$ is not hit by $p$. The following lemma gives an $O(n\log n)$ time algorithm to compute the indices $i'_p$ and $j'_p$ for all $p\in P$. 

\begin{lemma}
\label{lemm:smallestIndex}
There is an algorithm that can compute $i'_p$ and $j'_p$ for all $p \in P$ in $O(n\log n)$ time.
\end{lemma}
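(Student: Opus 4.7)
The plan is to apply the duality used earlier for the lower arcs, reducing the task to a ``first above line'' query in the dual plane, and then to answer all such queries with a divide-and-conquer that sweeps over an upper convex hull with a monotone pointer. I would split the computation by whether the extremal non-hitting index falls in $[1,t]$ (lower half-planes) or in $[t+1,n]$ (upper ones); by an appropriate reflection the upper case is symmetric to the lower case, so I focus on the lower case. Under the duality of Section~\ref{sec:trans}, the points $h_1^*,\dots,h_t^*$ are sorted by $x$-coordinate and each $p$ yields the query line $\ell_p: y = p_x x - p_y$, with $p\notin h_i$ equivalent to $h_i^*$ lying strictly above $\ell_p$. Thus the subproblem reads: for each $\ell_p$, find the smallest (and, by a symmetric recursion, the largest) index $i$ such that $h_i^*$ is above $\ell_p$.

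For the smallest-index case I would sort $P$ once by $p_x$ (the slope of $\ell_p$) in $O(n\log n)$ time, and then run a recursive procedure on a range of indices $[l,r]$ together with a set $Q$ of queries, already sorted by slope. With $m=\lfloor(l+r)/2\rfloor$, build the upper convex hull of $\{h_l^*,\dots,h_m^*\}$ in $O(m-l+1)$ time (the points are already $x$-sorted). Some $h_i^*$ with $i\in[l,m]$ lies above $\ell_p$ iff the maximum of $y - p_x x$ over this hull exceeds $-p_y$, and the maximizing hull vertex is the point where the hull's edge slopes drop through the value $p_x$. The key observation is that this tangent vertex moves monotonically to the left along the hull as $p_x$ increases, because the upper hull is concave and larger $p_x$ penalizes larger $x$ more. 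Sweeping $Q$ in increasing order of $p_x$ with a single non-rewinding pointer on the hull therefore answers each query in amortized $O(1)$ time; ``yes'' queries go into $Q_L$ (recurse on $[l,m]$) and ``no'' queries into $Q_R$ (recurse on $[m+1,r]$), both of which inherit the sorted order as subsequences of $Q$. At a leaf $l=r$, the test against $h_l^*$ is direct.

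At every level of the recursion the ranges partition $[1,t]$ and the queries partition $P$, so the total per-level cost is $O(n+t)$, and over the $O(\log t)$ levels the total time is $O((n+t)\log t)=O(n\log n)$. The largest-index case $j'_p$ is handled by a symmetric recursion that inspects the right half first, and the upper half-planes are handled by the same algorithm after an analogous duality (e.g., reflecting about the $x$-axis). I expect the main technical obstacle to be verifying the monotone-pointer invariant cleanly---that the maximizer of $y - p_x x$ on the upper hull is non-increasing in $p_x$ along the hull, and that this monotonicity is preserved under the recursive split of $Q$ into $Q_L$ and $Q_R$---but both facts follow directly from the concavity of the upper hull and from the fact that $Q_L$ and $Q_R$ are extracted as subsequences of the slope-sorted $Q$.
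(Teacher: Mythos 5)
Your proposal is correct, and it implements the same underlying decomposition as the paper by a genuinely different mechanism. The paper also does a binary search over the index range $[1,t]$: it builds a balanced tree whose nodes store the lower envelopes of the bounding lines of the half-planes in their subtrees (built bottom-up in $O(n\log n)$ time because sibling envelopes cross at most once), and each query $p$ descends the tree by testing whether $p$ lies below the left child's envelope; a naive descent costs $O(\log^2 n)$ per point, which the paper reduces to $O(\log n)$ via fractional cascading over the $x$-intervals of the envelopes. Your dual formulation is the mirror image of this: ``is $p$ above the lower envelope of the left half'' is exactly ``does some dual point of the left half lie above $\ell_p$,'' i.e., a tangent/extreme-point query on the upper hull of an $x$-sorted sub-range. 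Where you genuinely diverge is in making these per-node tests cheap: instead of fractional cascading you batch all queries, sort them once by slope, and exploit the monotone movement of the maximizer of $y-p_x x$ along the (concave) upper hull to answer all queries at a node with a single non-rewinding pointer in amortized $O(1)$ each; since $Q_L$ and $Q_R$ are subsequences of the slope-sorted $Q$, the invariant propagates down the recursion, giving $O((n+t)\log t)$ overall. Your monotonicity claim is sound (the tangent vertex of the upper hull in direction $(-\lambda,1)$ moves left as $\lambda$ increases), and the accounting is right. The trade-off is that your algorithm is inherently offline (all of $P$ must be known up front), whereas the paper's structure answers each query independently in $O(\log n)$ after preprocessing; in exchange, you avoid fractional cascading entirely, which is arguably simpler to implement. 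One small point to make explicit: when a query reaches the leaf of the $[1,t]$ recursion and the final test against $h_t^*$ also fails (i.e., $p$ hits every lower half-plane), that query must be forwarded to the symmetric procedure on $H_u$ -- the paper handles this the same way.
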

\begin{proof}
We only discuss how to compute $i'_p$ as computing $j'_p$ can be done analogously. 

Let $T$ be a complete binary search tree whose leaves from left to right store half-planes of $H_l=\{h_1,h_2,\ldots,h_t\}$ in their index order. As $t\leq n$, the height of $T$ is $O(\log n)$. For each node $v \in T$, let $H_l(v)$ denote the set of half-planes in the leaves of the subtree rooted at $v$. We use $\calL(v)$ to denote the lower envelope of the bounding lines of half-planes of $H_l(v)$. Observe that $p$ hits every half-plane of $H_l(v)$ if and only if $p$ is below $\calL(v)$. Suppose $u$ and $w$ are the left and right children of $v$, respectively. By the definition of the indices of the half-planes of $H_l$, the slopes of the half-planes of $H_l(u)$ are smaller than those of $H_l(w)$. Therefore, $\calL(u)$ and $\calL(w)$ have at most one intersection (indeed, in the dual plane the intersection is dual to the upper tangent of the upper hulls of two sets of points separated by a vertical line). As such, if $\calL(u)$ and $\calL(w)$ are known, then $\calL(v)$ can be computed in $O(|H_l(v)|)$ time. Therefore, the tree $T$ can be constructed in $O(n\log n)$ time in a bottom-up manner.


Consider a point $p \in P$. We compute $i'_p$ using $T$, as follows. We assume that $i'_p\leq t$. If this is not the case, then $i'_p$ can be computed by a similar algorithm using the upper half-planes of $H_u$. 

Starting from the root of $T$, for each node $v$, we do the following. Let $u$ and $w$ be the left and right children of $v$, respectively. We first determine whether $p$ is below $\calL(u)$, which can be done in $O(\log n)$ time by binary search on the sorted list of the vertices of $\calL(u)$ by their $x$-coordinates. If $p$ is above $\calL_u$, then $p$ must be outside a half-plane of $H_l(u)$; in this case, we proceed with $v=u$. Otherwise, we proceed with $v=w$. In this way, $i'_p$ can be computed after searching a root-to-leaf path of $T$. Because we spend $O(\log n)$ time on each node and the height of $T$ is $O(\log n)$, the total time for computing $i'_p$ is $O(\log^2 n)$. This can be improved to $O(\log n)$ using fractional cascading~\cite{ref:ChazelleFr86}, as follows.

The $x$-coordinates of all vertices of $\calL(v)$ partition the $x$-axis into a set $\calI_v$ of intervals. To determine whether $p$ is above $\calL_v$, it suffices to find the interval of $\calI_v$ that contains $x(p)$, the $x$-coordinate of $p$. 
We construct a fractional cascading data structure on the intervals of $\calI_v$ of all nodes $v \in T$, which takes $O(n \log n)$ time~\cite{ref:ChazelleFr86} since the total number of such intervals is $O(n \log n)$.
With the fractional cascading data structure, we only need to do binary search on the set of the intervals stored at the root to determine the interval containing $x(p)$, which takes $O(\log n)$ time. After that, for each node $v$ during the algorithm described above, the interval of $\calI_v$ containing $x(p)$ can be determined in $O(1)$ time~\cite{ref:ChazelleFr86}. As such, computing $i'_p$ takes $O(\log n)$ time. 

In summary, the indices $i'_p$ for all $p\in P$ can be computed in $O(n\log n)$ time. 
\end{proof}

The above computes all arcs of $\hatcalA$ in $O(n\log n)$ time. 
The following theorem summarizes the main result of this paper.  

\begin{theorem}
\label{theo:hitting}
Given a set $P$ of $n$ points and a set $H$ of $n$ half-planes, 
a smallest hitting set of $H$ can be computed in $O(n \log n)$ time.
\end{theorem}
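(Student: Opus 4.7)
The plan is to assemble Theorem~\ref{theo:hitting} by composing the reduction of Section~\ref{sec:trans} with the machinery of Section~\ref{sec:algo2}, then simply accounting for time. First, I would dispose of the preprocessing handled in Section~\ref{sec:pre}: in $O(n\log n)$ time, verify that every half-plane of $H$ is hit by some point of $P$ via a convex-hull test, remove redundant half-planes by sorting the normals, and check the special case in which a single point of $P$ lies in the common intersection $D$ of all half-planes (which, if it occurs, solves the problem immediately). After this, I may assume no two half-planes share a normal and that no single point of $P$ hits all of $H$, so that the construction of $\calA$ and $B$ in Section~\ref{sec:trans} is valid.

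Next I would invoke the reduction. Construct the circle $C$, the point set $B$ on $C$ from the normals of the half-planes (in $O(n\log n)$ time after sorting by normal direction, yielding the cyclic order $b_1,\ldots,b_n$), and identify the subsets $B_l$ and $B_u$. Then, rather than build the full $\calA$ (which has size $\Theta(n^2)$), build only $\hatcalA=\bigcup_{p\in P}\hatA(p)$. By the procedures in Section~\ref{sec:impl}, the lower arcs of all points are obtained in $O(n\log n)$ time via the Wang--Xue ray-shooting reduction on the upper envelope $\calU$ of $P^*$; the upper arcs are symmetric; and the left and right arcs are obtained in $O(n\log n)$ time through Lemma~\ref{lemm:smallestIndex}, using the balanced binary tree over $H_l$ (and analogously over $H_u$) augmented with fractional cascading on the $x$-intervals of the lower envelopes $\calL(v)$. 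Since $|\hatA(p)|\leq 4$ for every $p$, we have $|\hatcalA|\leq 4n$.

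With $\hatcalA$ and $B$ in hand, I would solve the circular-point coverage instance on $(\hatcalA,B)$ using the $O((|\hatcalA|+|B|)\log(|\hatcalA|+|B|))=O(n\log n)$ algorithm cited from~\cite{ref:WangAl24}, obtaining a smallest subset $\hat{A}_{\text{opt}}\subseteq\hatcalA$ whose union covers $B$. By Lemma~\ref{lemm:arc_containment}, every arc of $\calA\setminus\hatcalA$ is contained in some arc of $\hatcalA$; hence any subset of $\calA$ that covers $B$ can be transformed into a subset of $\hatcalA$ of no larger size that still covers $B$. This shows that $\hat{A}_{\text{opt}}$ is simultaneously a smallest subset of $\calA$ covering $B$, so it may play the role of $\opta$ from Corollary~\ref{corollary:circular-point}. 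Finally, construct $\optp$ by taking, for each arc in $\hat{A}_{\text{opt}}$, its defining point of $P$; parts (2) and (3) of Corollary~\ref{corollary:circular-point} then guarantee both that $|\optp|=|\hat{A}_{\text{opt}}|$ and that $\optp$ is a smallest hitting set of $H$.

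Summing the costs---$O(n\log n)$ for preprocessing, $O(n\log n)$ to build $B$ and $\hatcalA$, $O(n\log n)$ for the coverage subroutine, and $O(n)$ to recover $\optp$---yields the claimed $O(n\log n)$ bound. The only nontrivial point to verify carefully is the legitimacy of replacing $\calA$ by $\hatcalA$ in the coverage step; this is precisely what Lemma~\ref{lemm:arc_containment} supplies, so the theorem is obtained without further obstacle.
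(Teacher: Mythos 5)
Your proposal is correct and follows essentially the same route as the paper: the theorem is just the assembly of the preprocessing of Section~\ref{sec:pre}, the reduction of Section~\ref{sec:trans} (via Corollary~\ref{corollary:circular-point}), the replacement of $\calA$ by $\hatcalA$ justified by Lemma~\ref{lemm:arc_containment}, and the $O(n\log n)$ constructions and coverage subroutine of Section~\ref{sec:algo2}. The accounting of the running time is also exactly as in the paper.
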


It would be interesting to see whether our techniques can be used to tackle the half-plane coverage problem. The currently best algorithm for the problem runs in $O(n^{4/3} \log^{5/3} n \log^{O(1)} \log n)$ time~\cite{ref:WangAl24}.

\bibliographystyle{plainurl}

\end{document}